\newtheorem{theorem}{Theorem}
\newtheorem{lemma}[theorem]{Lemma}
\newtheorem*{problem*}{Problem}
\newtheorem*{theorem*}{Theorem}
\newtheorem*{lemma*}{Lemma}
\newtheorem*{proposition*}{Proposition}
\newtheorem*{definition*}{Definition}
\newcommand{\ignore}[1]{}
\newcommand{\col}[2]{#1_{#2}}
\newcommand{\entry}[3]{\lowercase{#1}_{#2#3}}
\newcommand{\lowest}{\mathrm{pivot}}
\newcommand{\bm}{D}
\newcommand{\Z}{\mathbb{Z}}
\newcommand{\N}{\mathbb{N}}
\newcommand{\interval}[2]{\{#1,\dots,#2\}}
\DeclareMathOperator{\im}{im}
\newcommand\phat{\textsc{Phat}\xspace}
\newcommand\dipha{\textsc{Dipha}\xspace}
\title{Distributed computation of persistent homology}
\author{Ulrich Bauer\footnote{Institute of Science and Technology Austria, Klosterneuburg, Austria. \url{http://ulrich-bauer.org}}
\and Michael Kerber\footnote{Max-Planck-Center for Visual Computing and Communication, Saarbr\"ucken, Germany. \url{http://mpi-inf.mpg.de/\textasciitilde mkerber}}
\and Jan Reininghaus\footnote{Institute of Science and Technology Austria, Klosterneuburg, Austria. \url{http://ist.ac.at/\textasciitilde reininghaus}}
}
\begin{document}

\maketitle

\begin{abstract}
Persistent homology is a popular and powerful tool for capturing topological features of data.
Advances in algorithms for computing persistent homology have reduced
the computation time drastically~-- as long as the algorithm does not exhaust
the available memory. Following up on a recently presented
parallel method for persistence computation on shared memory systems,
we demonstrate that a simple adaption of the standard reduction algorithm
leads to a variant for distributed systems. Our algorithmic design
ensures that the data is distributed over the nodes without redundancy;
this permits the computation of much larger instances than on a single
machine. Moreover, we observe that the parallelism at least compensates for
the overhead caused by communication between nodes, and often even speeds up
the computation compared to sequential and even parallel shared memory algorithms.
In our experiments, we were able to compute the persistent homology of filtrations with more than a billion ($10^9$) elements within seconds on a cluster with 32 nodes using less than 10GB of memory per node.

\end{abstract}

\setcounter{page}{0}
\thispagestyle{empty}
\newpage

\section{Introduction}
\paragraph{Background}

A recent trend in data analysis is to understand the shape of data (possibly in high dimensions) using topological
methods. The idea is to interpret the data as a growing sequence of topological spaces (a \emph{filtration}), such as for example, sublevel sets of a function with an increasing threshold, or thickenings of a point set. The goal is now to compute a topological summary of the
filtration, which can be used, for instance, to identify features, as well
as to infer topological properties of a sampled shape.
Persistent homology describes how homological features
appear and disappear in the filtration (see \cref{sec:background} for more details).
Besides significant theoretical advances, persistent homology
has been used in various applications; 
see~\cite{eh-survey} for a survey.
The success of persistent homology stems from its generality, which makes
it applicable for various forms of data, 
from its stability with respect to perturbations~\cite{ccggo-proximity,ceh-stability},
from its ability to provide information on all scales,  
and, last but not least, from the availability of efficient algorithms
for computing this information.

The standard algorithm for computing persistent homology assumes the input to be a boundary matrix
of a chain complex, and proceeds by reducing that matrix using a variant of
Gaussian elimination~\cite{elz-topological,zc-computing}.
The running time is cubic in the number of simplices; this can be improved
to matrix multiplication time~\cite{mms-zigzag} or replaced by
an output-sensitive bound~\cite{ck-output}.
However, on the practical side, it has been observed that the standard
algorithm usually performs much better on real-world instances than
predicted by the worst-case bounds, and relatively simple optimizations
of the standard method yield remarkable speed-ups~\cite{ck-persistent}.
Recently, Maria et al.~\cite{mbd-compressed} implemented a memory-efficient and comparably fast method 
for computing persistent cohomology, 
which yields the same information about birth and death of features as its homology counterpart.
Further improvements have been reported by using several processors 
in a shared memory environment~\cite{bkr-clear} 
(see also~\cite{lz-multicore,lsv-spectral}
for alternative parallelization schemes). 
With these optimizations, it is often the case that computing persistence actually takes less time than
even reading the input into memory.
Therefore, the limiting factor is not so much the time spent for computation
but rather the memory available on the computer. 

\paragraph{Contribution}
We present a scalable algorithm for computing persistent homology
in parallel in a distributed memory environment. This method 
the computation of much larger instances than using existing state-of-the art algorithms
on a single machine, 
by using sufficiently many computing nodes such that the data fits into the distributed memory.
While overcoming the memory bottleneck is the primary purpose of our approach,
we aim for a time-efficient solution at the same time. 

As demonstrated by our experiments, our implementation exhibits excellent scaling with respect to memory usage on a single node, and even outperforms similar parallel shared memory code in running time.
This result is somewhat surprising, since the computation of topological properties like persistent homology is a global problem, and at first sight it is not obvious at all that the computation can be performed with the very simple and inexpensive pattern of communication that our algorithm exhibits.

Our method closely resembles 
the \emph{spectral sequence algorithm} for persistent homology~\cite[S VII.4]{eh-computational}. However, several adaptions are necessary for an efficient
implementation in distributed memory. Most importantly, reduced columns
are not stored in order of their index in the matrix, but rather
according to the order of their \emph{pivot}, the largest index of a
non-zero entry. This allows a node to perform eliminations in its associated rows, and to determine if a column with pivot in these rows is reduced,
without further communication with other nodes. Furthermore, we
minimize the number of messages sent through the network by collecting
blocks of messages, and we simplify the communication structure
by letting node $j$ only communicate with nodes $j\pm 1$.
Finally, we incorporate the clear optimization~\cite{ck-persistent}
into the algorithm in order to avoid unnecessary column operations.

\paragraph{Organization}
We introduce the necessary background on persistent homology 
in Section~\ref{sec:background}, describe our distributed memory algorithm
in Section~\ref{sec:algorithm}, report on experimental evaluation
in Section~\ref{sec:experiments}, 
and conclude in Section~\ref{sec:conclusion}.

\section{Background}
\label{sec:background}

This section summarizes the theoretical foundations of persistent homology
as needed in this work. 
We limit our scope to simplicial homology over $\Z_2$ just for the sake of simplicity in the description; our methods however easily generalize to chain complexes 
over arbitrary fields. For a more detailed introduction, we refer to \cite{eh-computational,eh-survey,zc-computing}.

\paragraph{Homology}

Homology is an algebraic invariant for analyzing the connectivity of simplical complexes.
Let $K$ be a finite simplicial complex. 
For a given dimension $d$, a \emph{$d$-chain} is a formal sum
of $d$-simplices of $K$ with $\Z_2$ coefficients. The $d$-chains
form a group $C_d(K)$ under addition.
Equivalently, a $d$-chain can be interpreted as a subset of the $d$-simplices, with the group operation being the symmetric set difference.
The \emph{boundary} of a $d$-simplex $\sigma$ is the $(d-1)$-chain formed
by the sum of all faces of $\sigma$ of codimension $1$. This operation extends linearly
to a \emph{boundary operator} $\partial_d:C_d(K)\rightarrow C_{d-1}(K)$.
A $d$-chain $\gamma$ is a \emph{$d$-cycle} if $\partial_d(\gamma)=0$.
The $d$-cycles form a subgroup of the $d$-chains, denoted by $Z_d(K)$. A $d$-chain $\gamma$ is called a 
\emph{$d$-boundary} if $\gamma=\partial_d(\xi)$ for some $(d+1)$-chain $\xi$.
Again, the $d$-boundaries form a subgroup $B_d(K)$ of $C_d(K)$, and since $\partial_d(\partial_d(\xi))=0$
for any chain $\xi$, $d$-boundaries are $d$-cycles, 
and so $B_d(K)$ is a subgroup of $Z_d(K)$.
The \emph{$d^\text{th}$ homology group} $H_d(K)$ is defined as the quotient group
$Z_d(K)/B_d(K)$. 
In fact, the groups $C_d(K)$, $Z_d(K)$, $B_d(K)$, and $H_d(K)$ are $\Z_2$-vector spaces. The dimension of $H_d(K)$ is called
the \emph{$d^\text{th}$ Betti number} $\beta_d$.
Roughly speaking, the Betti numbers in dimension $0$, $1$, and $2$ count the number
of connected components, tunnels, and voids of $K$, respectively.

\paragraph{Persistence}
Consider a \emph{simplexwise filtration} of $K$, i.e., a sequence of inclusions
$\emptyset=K_0\subset\ldots\subset K_n=K$ 
such that $K_i=K_{i-1}\cup\{\sigma_i\}$, where $\sigma_i$ is a simplex of $K$.
We write $H_*(K_i)$
for the direct sum of the homology groups of $K_i$ in all dimensions.
For $i\leq j$, the inclusion $K_{i}\hookrightarrow K_{j}$ induces a homomorphism
$h_{i}^{j}: H_*(K_i)\rightarrow H_*(K_{j})$ on the homology groups. 
We say that a class $\alpha\in H_*(K_i)$ is
\emph{born at (index) $i$} if \[\alpha\notin\im h_{i-1}^i. \]
A class $\alpha\in H_*(K_i)$ born at index $i$ \emph{dies entering (index)~$j$} if
\[h_i^{j-1}(\alpha)\notin\im h_{i-1}^{j-1}\text{ but }h_i^j(\alpha)\in\im h_{i-1}^j
.\]
In this case, the index pair $(i,j)$ is called a \emph{persistence pair}, and the difference $j-i$ is the \emph{(index) persistence}
of the pair. 
The transition from $K_{i-1}$ to $K_i$ either causes the birth or the death
of some homology class. This homology class is not unique in general.

\paragraph{Boundary matrix}
For a matrix $M\in(\Z_2)^{n\times n}$, let $\col{M}{j}$
denote its $j^\text{th}$ column and $\entry{M}{i}{j}\in\Z_2$
its entry in row $i$ and column $j$.
For a column $\col{M}{j}$, %
we define $\lowest(\col{M}{j})=\min\{p\in \N_0: \entry{M}{i}{j}=0\text{ for all }i>p\}$
and call it the \emph{pivot index} of that column. When obvious from the context, we omit explicit mention of the matrix $M$ and write $\lowest(j)$ for $\lowest(\col{M}{j})$.

The \emph{boundary matrix} $\bm\in(\Z_2)^{n\times n}$ 
of a simplexwise filtration 
$(K_i)_{i=1}^n$ is the $n\times n$ matrix of the boundary operator $\partial_*:C_*(K)\to C_*(K)$ with respect to the ordered basis $(\sigma_i)_{i=1}^n$ of $C_*(K)$. We have
$\entry{\bm}{i}{j}=1$ 
if and only if $\sigma_i$ is a face of $\sigma_j$ of codimension~$1$.
In other words, the $j^\text{th}$ column of $\bm$ encodes the boundary
of $\sigma_j$. 
$\bm$ is an upper-triangular matrix, since any face of $\sigma_j$
must precede $\sigma_j$ in the filtration.

\paragraph{Matrix reduction}
A column operation of the form
$\col{M}{j}\gets \col{M}{j}+\col{M}{k}$ is called \emph{left-to-right addition} if $k<j$. 
A left-to-right addition is called \emph{eliminating} if it decreases $\lowest(j)$.
A column $\col{M}{j}$ is called
\emph{reduced} if $\lowest(j)$ cannot be decreased by applying any sequence of left-to-right additions.
In particular, there is no non-zero column $\col{M}{k}$ with $k<j$ and $\lowest(k)=\lowest(j)$. Clearly a zero column is reduced.
Note that a reduced column remains reduced under eliminating left-to-right column additions.

We call a matrix $M$ \emph{reduced} if all columns are reduced, or equivalently, if no two non-zero columns have the same 
pivot index. We call $M$ \emph{reduced at index $(i,j)$} 
if the lower left
submatrix of $M$ with rows of index~$> i$ and columns of index $\leq j$ is reduced.
A sufficient condition for column $\col{M}{j}$ to be reduced is that $M$ is reduced at index $(i,j)$ with $i=\lowest(j)$.

If $R$ is a reduced matrix obtained by applying left-to-right additions to $M$, we call it a \emph{reduction of $M$}.
In this case, we define
\[
P_R:=\{(i,j)\mid i=\lowest(\col{R}{j})>0\}\\
\]
Although the reduction matrix $R$ is not unique, 
the set $P_R$ is the same for any reduction of $M$;
therefore, we can define
$P_{M}$ to be equal to $P_R$  for any reduction $R$
of $M$. %

\paragraph{Persistence by reduction}

For the boundary matrix $\bm$ of the filtration $(K_i)_{i=1}^n$, 
the first $i$ columns generate the boundary group $B_*(K_i)$. This property is invariant under left-to-right column additions.
For a reduction $R$ of $\bm$, 
the non-zero columns among the first $i$ columns actually form a basis of $B_*(K_i)$. 
Note that 
\begin{align*}i=\dim C_*(K_i)&=\dim Z_*(K_i)+\dim B_*(K_i)\text{ and }\\
\dim H_*(K_i)&=\dim Z_*(K_i)-\dim B_*(K_i).
\end{align*}
Hence, if $\col{R}{i}$ is zero, we have 
\begin{align*}
\dim B_*(K_i)&=\dim B_*(K_{i-1}), \\
\dim Z_*(K_i)&=\dim Z_*(K_{i-1})+1 \text{, and}\\
\dim H_*(K_i)&=\dim H_*(K_{i-1})+1,
\end{align*}
and so some homology class is born at $i$. If on the other hand $\col{R}{j}$ is non-zero with $i=\lowest(j)$, we have
\begin{align*}
\dim B_*(K_i)&=\dim B_*(K_{i-1})+1, \\
\dim Z_*(K_i)&=\dim Z_*(K_{i-1}) \text{, and}\\
\dim H_*(K_i)&=\dim H_*(K_{i-1})-1.
\end{align*}
The fact that $\col{R}{j}$ has pivot $i$ means that $\col{R}{j}\in Z_*(K_i)$ and hence \[[\col{R}{j}]\in H_*(K_i);\] the fact that it is reduced means that there is no $b\in B_*(K_{j-1})$ with $b+\col{R}{j}\in Z_*(K_{i-1})$ and hence \[[\col{R}{j}]\not\in\im h_{i-1}^i.\] We conclude that $[\col{R}{j}]$ is born at $i$. We even have \[[\col{R}{j}]\not\in\im h_{i-1}^{j-1}.\] Moreover, $\col{R}{j}$ is a boundary in $K_{j-1}$, and so \[[\col{R}{j}]=0\in\im h_{i-1}^{j}.\] We conclude that
the pairs $(i,j)\in P_\bm$ are the persistence pairs
of the filtration.

The standard way to reduce $\bm$ is to process columns from left 
to right; for every column, previously reduced columns are added from the left
until the pivot index is unique.
A lookup table can be used
to identify the next column to be added in constant time. 
The running time is at most cubic in $n$,
and this bound is actually tight for certain input filtrations,
as demonstrated in~\cite{morozov-persistence}. 

\paragraph{Clearing optimization}

Despite its worst-case behavior, there are techniques to speed up
the reduction significantly in practice. A particularly simple yet powerful improvement
has been presented in~\cite{ck-persistent}. It is based on the following observations.

First, the reduction of the matrix can be performed separately for each dimension $d$, by restricting to the submatrix corresponding to columns of dimension $d$ and rows of dimension $d-1$. This submatrix is exactly the matrix of the boundary operator $\partial_d:C_p(K)\rightarrow C_{p-1}(K)$.
The second basic fact to note is that in any reduction of $\bm$,
if $i$ is a pivot of some column $j$, the $i^\text{th}$ column is zero.

This leads to the following variant of the reduction algorithm:
the boundary matrix is reduced separately in each dimension in decreasing order.
After the reduction in dimension $d$, 
all columns corresponding to pivots indices are set to zero~-- we call 
this process \emph{clearing}. Note that columns 
corresponding to $d$-simplices have pivots corresponding to $d-1$-simplices.
After clearing, we proceed with the reduction in dimension $d-1$. 

\section{Algorithm}
\label{sec:algorithm}

Throughout the section, 
let $(K_i)_{i=1}^n$ be a filtration of a simplicial complex 
consisting of $n$ simplices, represented by its boundary matrix $\bm$.
Our goal is to compute the persistence pairs of $(K_i)_i$ on a cluster of $p$
processor units, called \emph{nodes}, 
which are indexed by the integers $\interval{1}{p}$.

\paragraph{Reduction in blocks} 
Let $0=r_0<\dots<r_i<\dots<r_p=n$ be an integer partition of the interval $\interval{0}{n}$.
Let the \emph{$i^\text{th}$ range} be the interval 
of integers $k$ with $r_{i-1}<k\leq r_i$.
We define the \emph{block $(i,j)$} of~$M$
as the block submatrix
with rows from the $i^\text{th}$ row range and columns from the $j^\text{th}$ columns range.
The blocks partition the matrix into $p^2$ submatrices. Any block $(i,j)$ with $i>j$ is completely zero, since $\bm$ is lower triangular.

To simplify notation, we call $M$ \emph{reduced at block $(i,j)$} if $M$ is reduced at index $(r_{i-1},r_j)$. Moreover, we call $M$ \emph{reducible in block $(i,j)$} 
if $M$ is reduced at block $(i,j-1)$ and at block $(i+1,j)$. 
This terminology is motivated by the fact that in order to obtain a matrix that is reduced at block $(i,j)$, only entries in block $(i,j)$ have to be eliminated, as described in \cref{alg:block_reduction} and shown in the following lemma. 

\begin{algorithm}[ht]
\caption{Block reduction}
\label{alg:block_reduction}
\begin{algorithmic}[1]

\Require{input $M$ is reducible in block $(i,j)$}
\Ensure{result $M$ is reduced at block $(i,j)$}
\Procedure {ReduceBlock}{$i,j$}
    \For{each $l$ in range $j$ in increasing order}
    \label{alg:main_for_loop}
        \While{$\exists k$ with $\lowest(k)=\lowest(l)$ in range $i$}
        \label{alg:main_while_loop}
          \State add column $k$ to column $l$
        \EndWhile
        \If{$\lowest(l)$ is in range $i$}
        	\State add column $l$ to collection of reduced columns
        \EndIf
    \EndFor
\EndProcedure
\end{algorithmic}
\end{algorithm}
\begin{lemma}\label{lem:block_reduction}
\Cref{alg:block_reduction} is correct: if $M$ is reducible in block $(i,j)$, then applying \cref{alg:block_reduction} yields a matrix which is reduced at block $(i,j)$.
\end{lemma}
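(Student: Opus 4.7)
The plan is to induct on the outer loop variable $l$ to prove that after processing $l$, the submatrix $S_l$ consisting of rows of index $> r_{i-1}$ and columns of index $\leq l$ is reduced; taking $l = r_j$ then yields the desired reducedness at block $(i,j)$. Before starting the induction I would establish termination of the inner while loop: every addition uses a source column $k$ with $\lowest(k) = \lowest(l)$, so the addition strictly decreases $\lowest(l)$, and since $\lowest$ takes nonnegative integer values the loop must terminate.

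The key structural observation driving the argument is that only columns $k$ whose pivot lies in row range $i$ are ever added to column $l$. Such a $k$ vanishes identically in all rows of index $> r_i$, so no entry of $M$ in those rows is ever modified by the algorithm. Combined with the hypothesis that $M$ is already reduced at block $(i+1,j)$, this shows that the submatrix of $M$ with rows of index $> r_i$ and columns of index $\leq r_j$ remains reduced throughout execution.

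For the inductive step, I assume $S_{l-1}$ is reduced (the base case $l = r_{j-1}$ is exactly the assumption that $M$ is reduced at block $(i,j-1)$). Since processing column $l$ modifies only column $l$, reducedness among columns $< l$ is preserved, and it suffices to compare the pivot of column $l$ in $S_l$ against those of earlier columns. When the while loop halts, $\lowest(l)$ must fall in one of three regions: at or below $r_{i-1}$ (in which case column $l$ is the zero column of $S_l$, trivially consistent with reducedness), inside row range $i$ (in which case the loop halted precisely because no other column shares this pivot in range $i$), or strictly above $r_i$ (in which case the loop guard was never satisfied in the first place). A short case analysis then excludes collisions: two pivots landing in range $i$ are ruled out by the termination condition of the while loop, two pivots strictly above $r_i$ are ruled out by the structural observation from the preceding paragraph, and a range-$i$ pivot cannot coincide with a pivot strictly above $r_i$.

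The main conceptual subtlety I expect is the distinction between the ``full'' pivot $\lowest$ used by the algorithm and the ``restricted'' pivot relevant to reducedness of the submatrix $S_l$: the two agree whenever $\lowest(l) > r_{i-1}$, but the restricted pivot vanishes as soon as $\lowest(l) \leq r_{i-1}$. Carefully tracking this distinction --- while remembering that columns of index $\leq r_{j-1}$ are never touched and enter the comparison only through the precondition on block $(i,j-1)$ --- is what glues the case analysis together.
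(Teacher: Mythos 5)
Your proof is correct and follows essentially the same route as the paper's: induction on the outer loop index $l$, showing $M$ is reduced at index $(r_{i-1},l)$ after each iteration by combining the \textbf{while}-loop exit condition with the precondition that $M$ is reduced at block $(i,j-1)$ (base case) and at block $(i+1,j)$. You simply make explicit several points the paper leaves implicit --- termination of the inner loop, the fact that rows of index $>r_i$ are never modified (so the block-$(i+1,j)$ precondition persists), and the full-versus-restricted pivot distinction --- which is a faithful elaboration rather than a different argument.
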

\begin{proof}
By induction on $l$, $M$ is reduced at index $(r_{i-1},l)$ after each iteration of the main \textbf{for} loop (\cref{alg:main_for_loop}). This follows directly from the exit condition of the \textbf{while} loop in \cref{alg:main_while_loop}, together with the induction hypothesis and the precondition that $M$ is reduced at index $(r_i,r_j)$ and hence also at index $(r_i,l)$.
\end{proof}
\begin{lemma}
\Cref{alg:block_reduction} only requires access to the unreduced columns of $M$ in range $j$ and the reduced columns with pivot in range $i$.
\end{lemma}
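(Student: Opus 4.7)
The plan is to verify the lemma by inspection of \cref{alg:block_reduction}, classifying every access to a column of $M$. The algorithm touches columns in only two places: the outer \textbf{for} loop ranges over columns $l$ in column range $j$ (which are modified in place and need not be reduced at entry, accounting for the ``unreduced columns in range $j$''); and the inner \textbf{while} loop reads a column $k$ with $\lowest(k)=\lowest(l)$ whose pivot lies in row range $i$, then adds it to $\col{M}{l}$.

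For the first category there is nothing to show: the $l$'s are exactly the columns of range $j$, read and written in place. For the second category, the pivot of $k$ lies in range $i$ by the \textbf{while}-loop condition, so the remaining task is to argue that a suitable $k$ can always be drawn from the collection of reduced columns (with pivot in range $i$) maintained by the algorithm, rather than from an arbitrary and possibly unreduced column of $M$.

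To establish this, I would combine the precondition that $M$ is reducible in block $(i,j)$ with the inductive invariant from the proof of \cref{lem:block_reduction}: before iteration $l$ of the outer loop, $M$ is reduced at index $(r_{i-1}, l-1)$, i.e., the submatrix of $M$ with rows of index $> r_{i-1}$ and columns of index $\leq l-1$ is reduced. In particular, any column of index at most $l-1$ whose pivot lies in range $i$ is itself reduced. Thus the requested $k$ is always found among the columns previously inserted into the reduced-column collection, either during earlier invocations of \textsc{ReduceBlock} on blocks $(i,j')$ with $j'<j$, or during earlier iterations of the current call.

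The only mildly subtle point is this identification of the source of $k$: the pseudocode is agnostic about where $k$ comes from, and one must observe that the invariants above guarantee a reduced $k$ with the correct pivot exists in the pivot-indexed collection, so no global scan of $M$ is needed. Given this, the lemma follows immediately from inspection of the code.
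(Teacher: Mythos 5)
Your proposal is correct and follows essentially the same route as the paper: both classify the column accesses and use the invariant from the proof of \cref{lem:block_reduction} (reducedness at index $(r_{i-1},l-1)$ before the iteration for $l$), together with the sufficient condition for a single column to be reduced, to conclude that every column $k$ added in the \textbf{while} loop is reduced with pivot in range $i$. The paper additionally notes explicitly that the modified column $l$ is unreduced at that moment (since $k<l$ shares its pivot), but this is a minor difference in emphasis, not in approach.
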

\begin{proof}
Let $l$ be in range $j$ and let $k<l$ be such that $\lowest(k)=\lowest(l)$ is in range $i$, as in the \textbf{while} loop in \cref{alg:main_while_loop}. Then clearly column $l$ is unreduced. Moreover, as shown in the proof of \cref{lem:block_reduction}, $M$ is reduced at index $(r_{i-1},l)$. Since $\lowest(k)$ is in range $i$, we have $r_{i-1}<\lowest(k)$, and by assumption $k<l$. Hence $M$ is also reduced at index $(\lowest(k),k)$, i.e., column $k$ is reduced.
\end{proof}

\paragraph{Parallel reduction}
We now describe a parallel algorithm to reduce a boundary matrix $D$ by applying block reduction on all blocks $(i,j)$ with $i\leq j$ in a certain order.

The algorithm 
reduces the blocks starting with the diagonal blocks $(i,i)$
with $1\leq i\leq p$. Indeed, note that the boundary matrix $\bm$ is $(i,i)$-reducible
for any diagonal block $(i,i)$. All block reductions for diagonal blocks are independent
and can be performed in parallel. Now consider a block of the form $(i,j)$ with $i<j$.
Note that this block can be reduced as soon as blocks the $(i,j-1)$ and $(i-1,j)$ have been reduced. This relation defines a partial order on the blocks $(i,j)$ with $i\leq j$. If the order of execution of the block reductions is consistent with that partial order, the preconditions of block reduction are satisfied in every block. 
Note that two blocks $(i,j)$ and $(i',j')$ can be reduced independently iff either ($i<i'$ and $j<j'$) or ($i>i'$ and $j>j'$). After having reduced the block $(1,p)$, the postcondition of \cref{alg:block_reduction} yields that the resulting matrix is a reduction of the input boundary matrix $D$.

Note that a special case of this block-reduction scheme is the \emph{spectral sequence algorithm} presented in~\cite[S VII.4]{eh-computational}. This algorithm sweeps the blocks diagonally, and in each phase $r\in\interval1p$ of the sweep it reduces all blocks $(i,j)$ with $j-i=r-1$ in order of increasing index $i$. The algorithm as described is sequential, however, as discussed above, within a given phase $r$ the blocks can be reduced independently.

\paragraph{Distributed reduction}
We now describe how the data and the workload are distributed and transferred between the nodes.

Each node $i$ is assigned a row of blocks for reduction. The blocks are necessarily processed from left to right.
Recall that reducing a block $(i,j)$ requires access to the unreduced columns in range $j$, and to the reduced columns with pivot in range $i$. 
During the execution, each node $i$ maintains a collection of all reduced columns with pivot in the $i^\text{th}$ range, indexed by pivot. The unreduced columns in a given range $j$, on the other hand, are passed on from node to node. No data is duplicated among the nodes; each column of the matrix is stored in exactly one node throughout the execution of the algorithm. The union of the locally stored unreduced and reduced columns yields a distributed representation of the partially reduced
boundary matrix.

Initially, each node $i$ loads the columns of the input boundary matrix in range $i$. 
The following procedure is now repeated, with $j$ ranging from $i$ to $m$.
Node $i$ performs reduction in block $(i,j)$ and retains the reduced columns with pivot in range $i$ in its collection.
After that, it sends a package to node $i-1$ containing the remaining unreduced columns in range $j$ (if $i>1$), and receives a package from node $i+1$ containing the unreduced columns in range $j+1$ (if $j<p$).

\begin{algorithm}[ht]
\caption{Distributed matrix reduction}
\label{alg:distributed_reduction}
\begin{algorithmic}[1]

\Require{access to columns of input boundary matrix $D$ in range $j$}
\Ensure{resulting output matrix $R$ is a reduction of $D$}
\Procedure {ReduceOnNode}{$i$}
	\State input package with columns of $D$ in range $i$
	\For{$j=i,\ldots,p$}
		\State \Call {ReduceBlock}{$i,j$}
		\If {$i>1$}
			\State send package with unreduced columns in range $j$ to node $i-1$
		\EndIf
		\If {$j<p$}
			\State receive package with unreduced columns in range $j+1$ from node $i+1$
		\EndIf
	\EndFor
	\State \Return reduced columns with pivot in range $i$
\EndProcedure
\end{algorithmic}
\end{algorithm}

Observe that in each iteration, node $i$ has all the information required to perform reduction in block $(i,j)$, namely, the unreduced columns in range $j$ and the reduced columns with pivot in range $i$. Moreover, the preconditions for block reduction are satisfied, since block $(i,j-1)$ is reduced on the same node $i$ before block $(i,j)$, and block $(i+1,j)$ is reduced on node $i+1$ before node $i$ receives the unreduced columns in range $j$ from node $i+1$. We conclude:
\begin{lemma}\label{lem:distributed_reduction}
If \cref{alg:distributed_reduction} is executed on a cluster with $p$ nodes, it computes a reduction of the input matrix.
\end{lemma}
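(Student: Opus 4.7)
The plan is to deduce the result from \cref{lem:block_reduction} applied iteratively: if every call to \textsc{ReduceBlock}$(i,j)$ encounters a matrix that is reducible in block $(i,j)$, then each call leaves the matrix reduced at block $(i,j)$, and once node $1$ completes its iteration $j=p$ the full matrix is reduced at block $(1,p)$, that is, at index $(r_0,r_p)=(0,n)$, hence reduced. Since each call of \cref{alg:block_reduction} performs only left-to-right additions, the terminal matrix is a reduction of $\bm$. I view the state of the computation as a single logical matrix $M$ whose columns are distributed across the nodes without duplication, so that the column operations invoked by different nodes cannot interfere with one another.

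The heart of the argument is establishing the reducibility invariant by induction on the blocks, ordered first by column range $j$ in increasing order and, for fixed $j$, by row range $i$ in decreasing order; this linearizes the partial order in which block $(i,j)$ depends on blocks $(i,j-1)$ and $(i+1,j)$. For the diagonal base case $(i,i)$ the predecessor blocks $(i,i-1)$ and $(i+1,i)$ lie strictly below the diagonal of $\bm$ and are therefore identically zero, so $\bm$ is vacuously reducible in every diagonal block at initialization. For the inductive step with $j>i$, reducedness at block $(i,j-1)$ follows from the inductive hypothesis applied to node $i$'s previous iteration together with \cref{lem:block_reduction}. Reducedness at block $(i+1,j)$ is obtained by following the communication chain: node $i$ receives the columns of range $j$ at the end of its iteration $j-1$, and these arrive only after node $i+1$ has completed its own iteration with loop variable $j$; by induction that iteration started in a reducible state and hence, by \cref{lem:block_reduction}, ended with block $(i+1,j)$ reduced. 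The analogous argument applied down the chain to nodes $i+2,\dots,p$ ensures that the lower portion of columns in range $j$ is already in order by the time the relevant send takes place.

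The main obstacle is that the execution is asynchronous, so the inductive hypothesis has to refer to a consistent global state of $M$ although many nodes may be active simultaneously. Two structural facts make this manageable: each column of $\bm$ resides on exactly one node at any moment, ruling out races between column operations on different nodes; and the blocking receive imposes a happens-before relation between the matching send on node $i+1$ and node $i$'s subsequent use of the transferred columns. Together these allow the relevant events to be linearized in the order used by the induction, after which the conclusion follows by iterating \cref{lem:block_reduction}.
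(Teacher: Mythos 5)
Your proposal is correct and takes essentially the same route as the paper: apply \cref{lem:block_reduction} block by block, using the communication pattern (and the partial order it induces on block reductions) to guarantee the reducibility precondition at every call, and conclude from reducedness at block $(1,p)$ and the fact that only left-to-right additions are performed that the result is a reduction. The paper's own proof is the observation paragraph preceding the lemma together with the ``Parallel reduction'' discussion; your handling of asynchrony and happens-before is, if anything, slightly more explicit than the paper's.
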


Note that the structure of communication between the nodes is very simple: each node $i$ only receives data from node $i+1$ and only sends data to node $i-1$. Moreover, less than $p$ messages are sent between each pair of consecutive nodes. This is highly beneficial for distributed computing, as the communication overhead and the network latency become negligible.

\paragraph{Clearing in parallel}
The clearing optimization from Section~\ref{sec:background} can be implemented
in the distributed reduction algorithm with minor changes. 
Recall that the the clearing optimization iterates over the dimensions $d$ in decreasing order and processes only the columns of a given dimension $d$ at a time.

The ranges are
defined by a single global partition $r_0<\ldots<r_p$ 
that does not change per dimension.
Note that this might cause initial column packages of different sizes 
in a given dimension, even if the ranges are all of same size.
However, it has the following advantage: when node $i$ has performed
its last block reduction for dimension $d$, it knows all 
pivots that fall in the $i^\text{th}$ range. All these pivots corresponds 
to $d-1$-simplices that create homology and hence correspond to zero columns in any reduction.
In the next iteration, node $i$ is initialized to process the columns
of dimension $d-1$ in the $i^\text{th}$ range. Before it starts
the block reduction, it can simply clear all columns with indices
that were pivots in dimension $d$. In particular, no communication with other
nodes is required.

\paragraph{Design rationale} 
We justify some design choices in our algorithm and discuss alternatives.
First, we implemented the sending of packages 
in Algorithm~\ref{alg:distributed_reduction} in a \emph{blocking} fashion,
i.e., a node does not start receiving the next package
until has sent and discarded its current package. Clearly, this strategy can result
in delayed processing of packages because a sending node has to wait
for its predecessor to be ready to receive a package. On the other hand, the strategy guarantees that
every node holds at most one package at a time; this prevents a slower node
from accumulating more and more packages, possibly causing high memory consumption.

A possible strategy to reduce the overall amount of communication would be
to have node $i$ send a unreduced column with pivot
in the $k^\text{th}$ range to node $k$ directly, instead of the predecessor node $i-1$.
However, this approach would complicate
the communication structure and data management significantly. Any node would have to be
able to receive unreduced columns any time, and it would not be possible to bound the number of unprocessed columns a node has to maintain in memory. It would also increase the number of messages send through
the network.

A somewhat dual approach to our communication scheme would be to send the reduced
columns from node $i$ to $i+1$ instead of sending the unreduced columns from node $i$ to $i-1$.
In this variant, node $i$ would perform reduction 
in block $(j,i)$ for $j=i,i-1,\ldots,1$.
However, in this approach, the package size would increase towards
the end of the reduction, as the number of reduced columns increases, whereas in
our implementation the package size decreases together with the number of reduced columns.
Since typically most columns are reduced early on, we expect much more data to be sent 
between the nodes using this variant.

\section{Experiments}
\label{sec:experiments}

\begin{table}
\centering
\begin{tabular}{r||r|r||r|r|r|r|r|r|r|r}
&\multicolumn{2}{c||}{\phat}&\multicolumn{5}{c|}{\dipha}\\
\hline
cores/nodes&1&16&2&4&8&16&32\\
\hline
GRF2-$256$&10.2GB&10.5GB&11.1GB&5.6GB&2.8GB&1.4GB&0.74GB\\
GRF1-$256$&10.8GB&11.3GB&11.8GB&6.1GB&3.1GB&1.5GB&0.8GB\\
GRF2-$512$&&&&&&11.1GB&5.7GB\\
GRF1-$512$&&&&&&&9.1GB\\
vertebra16&&&&&&&9.0GB
\end{tabular}
\caption{Peak memory consumption for sequential and parallel shared memory
(\phat, left) algorithms and our distributed algorithm (\dipha, right)}
\label{tab:peak_mem}
\begin{tabular}{r||r|r||r|r|r|r|r|r|r|r}
&\multicolumn{2}{c||}{\phat}&\multicolumn{5}{c|}{\dipha}\\
\hline
cores/nodes&1&16&2&4&8&16&32\\
\hline
GRF2-$256$&14.6s&5.2s&10.1s&5.5s&3.4s&2.2s&1.6s\\
GRF1-$256$&28.8s&12.8s&27.2&20.3&15.4&12.1s&9.9s\\
GRF2-$512$&&&&&&17.9s&11.2s\\
GRF1-$512$&&&&&&&95.3s\\
vertebra16&&&&&&&34.9s
\end{tabular}
\caption{Running times for sequential and parallel shared memory
(\phat, left) algorithms and our distributed algorithm (\dipha, right)}
\label{tab:time}
\end{table}

Since our algorithm is, to the best of our knowledge, the first attempt at computing
persistence in a distributed memory context, we concentrate our experimental
evaluation on two aspects. 
First, how does our approach scale with an increasing
number of nodes, in running time and memory consumption? 
Second, how does the our algorithm compare with state-of-the-art 
sequential and parallel shared memory implementations on instances which are 
still computable in this context?

We implemented Algorithm~\ref{alg:distributed_reduction} in C++
using the OpenMPI implementation of the Message Parsing Interface standard%
\footnote{\url{www.open-mpi.org}}.
We ran the distributed algorithm on a cluster with up to 32 nodes, each with two Intel Xeon CPU E5-2670 2.60GHz processors (8 cores each) and 64GB RAM, connected by a 40Gbit Infiniband interconnect.

For comparison, our tests also include results for the \phat library%
\footnote{\url{http://phat.googlecode.com}},
which contains efficient sequential and parallel shared memory algorithms
for computing persistence. Among the sequential versions, 
the \verb"--twist" algorithm option, which
is the standard reduction with the clearing optimization described
in Section~\ref{sec:background}, together with the \verb"--bit_tree_pivot_column"
data structure option, showed the overall best performance (see the \phat documentation for more information). For parallel shared memory, the \verb"--block_spectral_sequence" algorithm with the \verb"--bit_tree_pivot_column"
data structure showed the overall best performance on the tested examples.
We therefore used these two variants for comparison.
The sequential and parallel shared memory algorithms were run on a single machine of the cluster.
In order to obtain a clear comparison between the shared memory and distributed memory algorithms, 
in our test of the distributed algorithm only one processor core per node was used.

For our tests, we focus on filtrations induced by 3D image data.
In particular, we used isotropic Gaussian random fields whose power spectral density is given by a power law $\|x\|^{-p}$. This process is commonly used in physical cosmology as a model for cosmic microwave background~\cite{p-cosmo}. We consider two images sizes: filtrations of images of size $256^3$ have a length of $n=511^3\approx133$ millions and a binary file size of around
5GB, while images of size $512^3$ yield a filtration of length $n=1023^3\approx1.07$ billions and a file size of around 40GB.
In addition, we included the $512^3$ medical image \verb"vertebra16" from the \textsc{VolVis} repository\footnote{Available at \url{http://volvis.org}} in our test set, a rotational angiography scan of a head with an aneurysm.

\begin{figure}[t]
\centering
\includegraphics[width=0.3\textwidth]{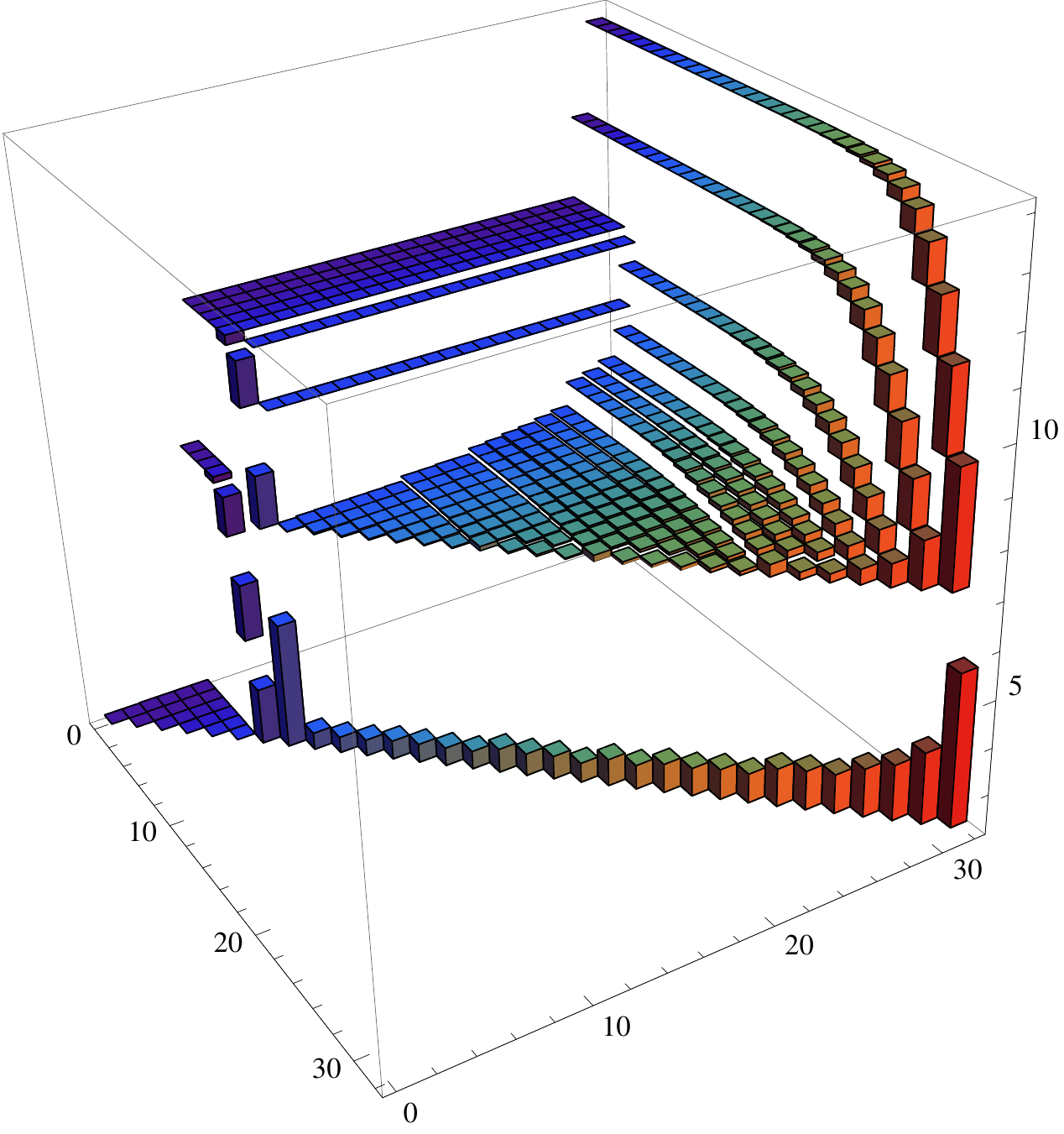}
\includegraphics[width=0.3\textwidth]{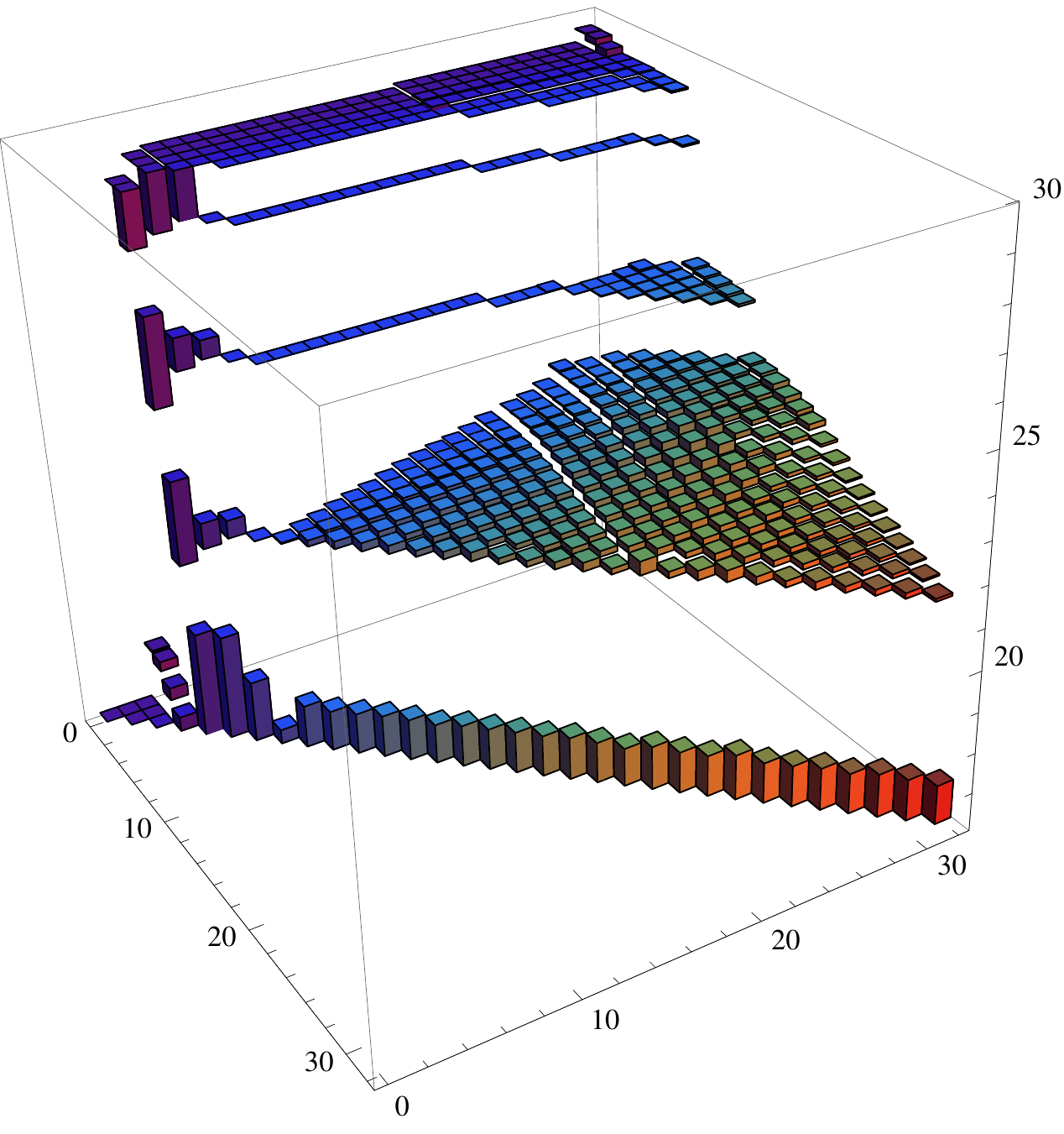}
\includegraphics[width=0.3\textwidth]{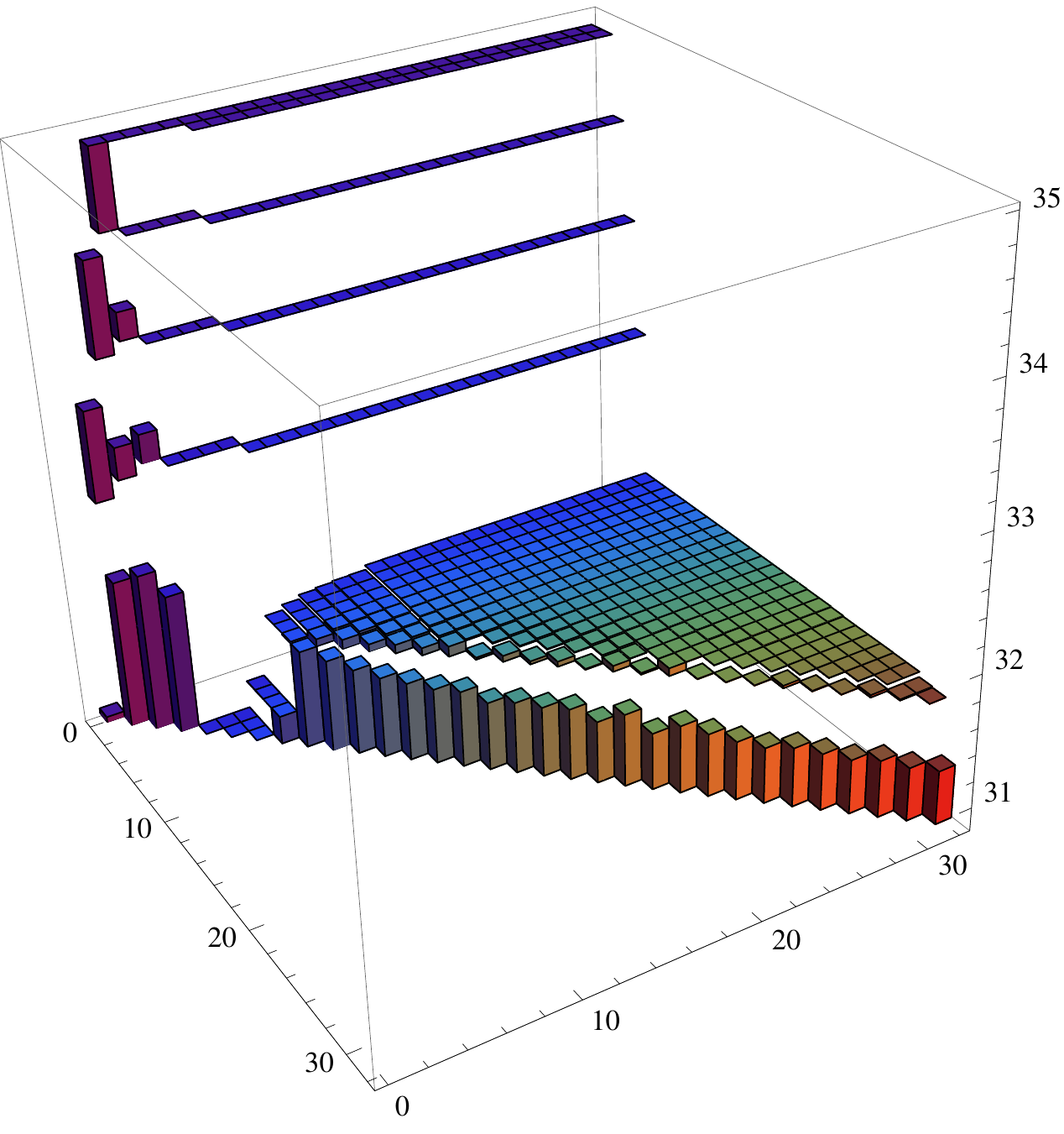}
\caption{Running times for each block reduction in dimensions $\delta=3,2,1$ for the vertebra16 data set.}
\label{tab:running_time}
\end{figure}

\paragraph{Scalability}
\Cref{tab:peak_mem,tab:time} show the running time and peak memory consumption
of our algorithm for images of size $256^3$ and $512^3$.
The table is incomplete; the algorithm was not able to compute a result
for the remaining cases because of address space limitations of the OpenMPI I/O API that we plan 
to circumvent in a forthcoming version.
We observe that the memory usage per node
is almost exactly halved when doubling the number of nodes. For
the running time, the speed-up factor is not quite as high,
but still the algorithm terminates faster when using more
nodes. In summary, this provides strong evidence that our algorithm 
scales well with the number of nodes, both regarding time and space complexity.

\paragraph{Comparison}
\Cref{tab:peak_mem,tab:time} also lists the results for the best sequential
and parallel shared memory algorithms of the \phat library. Both algorithms run out of 
memory when trying to compute persistence for larger examples on our testing 
machine, showing that our distributed approach indeed extends
the set of feasible instances. Moreover, we observe that the 
running time on 16 nodes with distributed memory is actually
lower than that of the parallel shared memory algorithm on a single machine
with 16 processor cores. One reason might be that the distributed system 
has a much larger total amount of processor cache available than the 
shared memory system. Since matrix reduction is more memory intensive than
processor intensive, this effect may actually outweigh the overhead of 
communication over the network.
This suggests that the distributed approach may be preferable
even if the solution is in principle computable in a non-distributed
environment.

\paragraph{Communication analysis}
We give more details on the amount of data transmitted between the nodes by our algorithm.
Table~\ref{tab:total_communication} shows the total amount of data
 exchanged; \cref{tab:pair_communication} shows the largest total amount of data
 transmitted between any pair of nodes; \cref{tab:package_communication} shows the largest package size. For the more challenging examples, the amount
is in the range of GBs. Considering the bandwidth of modern interconnects and the fact that communication is bundled in a small number of packages, the running time of the local block reductions dominates the time spent for communication.
This is illustrated in \cref{tab:running_time}, which shows a plot of the running times for each block reduction for the vertebra16 data set on 32 nodes.

\begin{table}[t]
\centering
\begin{tabular}{r||r|r|r|r|r|r|r|r}
nodes&2&4&8&16&32\\
\hline
GRF2-$256$&5.6MB&15.1MB&32.5MB&67.7MB&136MB\\
GRF1-$256$&69.2MB&218MB&497MB&1.0GB&2.0GB\\
GRF2-$512$&&&&342MB&694MB\\
GRF1-$512$&&&&&34.0GB\\
vertebra16&&&&&19.1GB\\
\end{tabular}
\caption{Total size of all packages sent over the network}
\label{tab:total_communication}

\bigskip

\begin{tabular}{r||r|r|r|r|r|r|r|r}
nodes&2&4&8&16&32\\
\hline
GRF2-$256$&5.6MB&5.6MB&5.6MB&6.5MB&8.7MB\\
GRF1-$256$&69.2MB&90.3MB&109MB&162MB&238MB\\
GRF2-$512$&&&&29.2MB&29.7MB\\
GRF1-$512$&&&&&5.0GB\\
vertebra16&&&&&4.2GB
\end{tabular}
\caption{Maximum total size of all packages transmitted between any pair of nodes}
\label{tab:pair_communication}

\bigskip

\begin{tabular}{r||r|r|r|r|r|r|r|r}
nodes&2&4&8&16&32\\
\hline
GRF2-$256$&3.1MB&2.9MB&2.7MB&3.6MB&2.5MB\\
GRF1-$256$&61.4MB&52.0MB&69.0MB&50.9MB&38.4MB\\
GRF2-$512$&&&&11.5MB&9.0MB\\
GRF1-$512$&&&&&1.9GB\\
vertebra16&&&&&1.5GB\\
\end{tabular}
\caption{Maximum package size sent over the network}
\label{tab:package_communication}
\end{table}

\section{Conclusion}
\label{sec:conclusion}
We presented the first implementation of an algorithm for computing persistent homology
in a distributed memory environment. While our algorithm resembles
the spectral sequence algorithm for persistence computation to a
large extent, 
several lower-level design choices were necessary for an efficient realization.
Our approach permits the computation of instances that were infeasible
for previous methods, and the parallelism also speeds up the
computation for previously feasible instances.

We plan to extend our experimental evaluation in future work. One problem
in benchmarking our new approach is that persistence
computation is only the second step in the pipeline: first, one has
to generate a filtration that serves as the input for the algorithm.
This itself usually requires a massive computation, which at some point becomes infeasible on single machines as well. 
We are currently working on methods for generating filtrations of large 3D images and Rips filtrations
in a distributed memory environment.
\newpage

\bibliography{bib}
\bibliographystyle{plain}

\end{document}